\begin{document}

\title{
Black-Box Verification for GUI Applications}

\author{Stephan Arlt \and Evren Ermis \and Sergio
Feo-Arenis \and Andreas Podelski}
\institute{Albert-Ludwigs-Universit\"at Freiburg}

\maketitle

\begin{abstract}
  In black-box testing of GUI applications (a form of system testing),
  a dynamic analysis of the GUI application is used to infer a
  black-box model; the black-box model is then used to derive test
  cases for the test of the GUI application.  In this paper, we
  propose to supplement the test with the verification of the
  black-box model.  We present a method that can give a guarantee of
  the absence of faults, i.e., the correctness of \emph{all} test
  cases of the black-box model.  The  black-model
  allows us to formulate a parametrized verification problem. As we will show, it
  also allows us to circumvent the static analysis of the GUI tool
  kit.  We have implemented our approach;
  preliminary experiments indicate its practical potential.
\end{abstract}


\section{Introduction and Overview}
\label{sec:introduction}

Many of today's applications provide a \emph{GUI} (Graphical User Interface)
which allows a user to interact with the system. A GUI consists of a set of
\emph{widgets}, e.g., buttons and check boxes. A user interacts with the system
by manipulating these widgets, e.g., clicking a button, or selecting a check
box. Each user interaction triggers an \emph{event}. An event is a message
generated by the GUI toolkit. A GUI application responds to a message with the
execution of an \emph{event handler} (e.g., a Java method).

The GUI application must not be confused with the GUI toolkit. In the fashion of
an operating system, the GUI toolkit (through a \emph{message loop})
continuously listens for events and, each time an event is triggered, initiates
the execution of the corresponding event handler.

Until now, the topic of GUI applications as a target of formal, static
analysis-based verification has received little attention (see,
however,~\cite{chen02,dwyer04,ganov09} for initial attempts).  It is generally
agreed that a static analysis cannot adequately account for the GUI toolkit
(even its source code was fully available, which is generally not the case).
The GUI toolkit is, however, an integral part of a GUI application.  Without
accounting for the GUI toolkit, it already seems impossible to infer what
event sequences are executable (an event sequence models a user interaction).
Not every event sequence is executable; for example, an event may lead to
closing a window and thus disable another event.

In contrast, testing of GUI applications (which is a form of system testing) has
become a very active research topic; see
e.g.,~\cite{arlt12,gross12,mariani12,memon07,belli01,white00} and the references
therein. Within the setting of black-box system testing, a dynamic analysis of
the GUI application is used to infer a black-box model.
Subsequently,  the black-box model is used to derive the test cases.

In this paper, we base ourselves on the notion of the \emph{Event Flow
  Graph} (EFG), which is the particular black-box model in the popular
approach of~\cite{memon07}.  The nodes in an EFG are events.  Its edge translate
the is-followed-by relation between events.  The EFG represents a set of event
sequences (the labelings of the paths in the graph).

The EFG can be inferred automatically.  A so-called \emph{GUI Ripper}
systematically explores the space of possible executions of the GUI application
and records the is-followed-by relation between events during those executions
through the edges of the EFG.  Note that the exploration can never be guaranteed
to be exhaustive (this is inherent to the black-box setting; an event can be
missed, for example).  Thus, the EFG is not guaranteed to include every possible
event sequence of the GUI application.

In this paper, we take the view where the actual test of a GUI
application is separated from the construction of the EFG.  I.e., the
test refers to the previously inferred EFG.  The EFG serves to
prescribe a set of potential test cases; i.e., the EFG defines an
upper bound on the set of event sequences to be checked.  Continuing
this view, the EFG defines a specific verification problem: \emph{does
  there exist an event sequence in the given EFG that, if run on the
  GUI application, leads to a faulty execution?}\footnote{The problem
  formulation (as well as our use of the term \emph{completeness})
  reflects the perspective of testing, which is falsification rather
  than verification.  However, we view the problem as a verification problem.}



The test of the GUI application with test cases derived from the given
EFG provides an incomplete solution for the verification problem
(incomplete because at no point, no matter how many test cases are
derived from the EFG, the absence of faults can be guaranteed).  In
this paper, we present a complete solution to the verification
problem.

The motivation is to supplement testing.  Testing is known to be
indispensable but it comes with the notorious deficiency of being
incomplete.  At some point during the test-debug cycle, the tester may
desire a formal guarantee of the absence of faulty event sequences (in
the set prescribed by the given EFG).  Given such a guarantee, the
tester can stop testing: no test case derived from the given EFG will
yield to a positive.

As usual with an undecidable verification problem, the terminology
\emph{complete solution} refers to an abstraction-based verification
algorithm with or without iterated abstraction refinement, i.e., with
the possibility of non-termination or with the possibility of a don't
know answer.

From the perspective of the programmer, a GUI application consists of
the family of event handlers but, as explained above, verification
needs to also account for the effect of the GUI toolkit on the
executability of event sequences.  Thus, a static analysis-based
verification method seems a no-go.  However, and this is a basic
observation put forward in this paper, at least some of the
information about the effect of the GUI toolkit on the executability
of event sequences is encoded in the EFG (simply by the fact that the
EFG has been constructed by executing the GUI application).  

In this paper, we propose a way to make this information available.
Moreover, we can formally show that it is sufficient to solve our
verification problem. 

Given the EFG and the family of event handlers, one can construct a program that
 contains the event handlers and a \emph{mock-up} of the GUI toolkit (of its
message loop). The program mimicks the behavior of the GUI application as far it
is represented by the EFG. We can show that by applying a static analysis-based
verification method to the program, we obtain a complete solution for the
verification problem for the GUI application and the given EFG.

We will next discuss another issue, an issue that leads us to
enhancing the above verification method.  That is, not every event
sequence in the set represented by the EFG needs to be executable on
the GUI application.  This is because, although an event may be
executable after 
another one in one path
, this may no longer hold in another
path (for an example, see Section~\ref{sec:motivation}). Formalized in
the framework of \emph{abstract
  interpretation}~\cite{Cousot:1977:AIU:512950.512973}, the
construction of the EFG uses an abstraction of the (finite) set of
executed paths by a graph which denotes a larger (infinite) set of
paths.

The program that we have constructed as input for verification accounts for the
behavior of (the message loop of) the GUI toolkit according to the EFG.  I.e.,
not every execution of the program may be executable on the GUI application.  As
a consequence, if the verification of the program returns a counterexample (a
faulty execution of the program), then this counterexample may not be executable
on the GUI application.  In our setting of abstraction-based verification, we
thus have two notions of spuriousness.  The first spuriousness is due to the
abstraction used by the static analysis (the path may be \emph{infeasible}
because of data dependencies).  The second spuriousness is due to the use of a
graph (i.e., the EFG) as a representation of a set of paths (the path may be not
executable because of the behavior of the GUI toolkit).

There is no way to detect the non-executability statically (other than by the
static analysis of the GUI toolkit).  In the dynamic setting of testing,
however, the non-executability of an event sequence is detected \emph{by
nature}.  Having introduced a first heresy (by formally verifying a program that
is inferred from a black-box model), we may be as well introduce a second heresy
and use dynamic analysis in an abstraction refinement iteration.  I.e., if the
verification of the program detects a counterexample (i.e., an execution of the
program that violates the correctness condition) and if the dynamic analysis
determines the non-executability of the counterexample by the GUI application,
then, as we will shown, the program can be refined so that the counterexample is no longer a possible
execution of the new program.  We thus enhance the verification method
with a \emph{program refinement iteration}.   

Even without the program refinement iteration, the verification of the program
constructed from a given EFG provides a guarantee of the absence of 
faults: no test case derivable from the EFG can lead to
a faulty execution of the GUI application.  
The enhanced verification method (with the program refinement
iteration) satisfies also the \emph{no false
  positives} requirement. I.e., if a
counterexample is returned then it indeed corresponds to an (``executable'')
faulty execution, and
if the GUI application does
not have a faulty execution, then the method will not return a
counterexample.
This means that we can use our method not only for verification but
also for falsification, i.e., as an enhancement of testing (it may
find errors that testing might not find).





We have implemented our overall approach and we have evaluated it on a
number of existing benchmarks from GUI testing.  Our preliminary
experiments indicate a promising potential of our method as
supplement and enhancement to testing.




\section{Motivating Example}
\label{sec:motivation}

In this section we illustrate the application of our approach on an example of a
GUI application. The example application depicted in the screenshot in
Figure~\ref{fig:example} consists of a \texttt{MainWindow} and a
\texttt{Dialog}. The \texttt{MainWindow} contains three buttons that can fire
the events $e_1$, $e_2$, and $e_3$. The \texttt{Dialog} contains one button
which can fire event $e_4$.

\begin{figure}
	\centering
	\includegraphics[scale=.4]{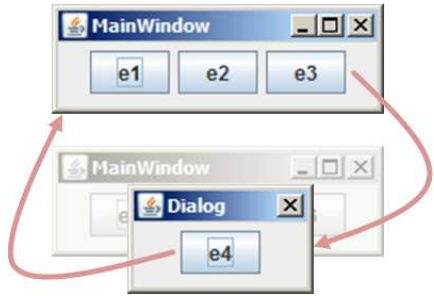}
	\caption{An example of a GUI application. The arrows between the two screenshots
	indicate the transition between two views: Clicking the button for event $e_3$
	leads to the \texttt{Dialog}; i.e., the buttons for the events $e_1$, $e_2$, and
	$e_3$ are no longer enabled. Clicking the button for event $e_4$ closes the
	\texttt{Dialog} and leads back to the first view; i.e., the buttons for the
	events $e_1$, $e_2$, and $e_3$ are enabled again.}
	\label{fig:example}
\end{figure}

In the first step of our approach we incorporate an Event Flow Graph (EFG) in
order to model possible user interactions of the GUI application. An EFG is a
directed graph where each node represents an event of the GUI. An edge between
two events states that the corresponding events can be executed consecutively.
The EFG of the example application depicted in Figure~\ref{fig:example-efg}
consists of the four events $e1$, $e2$, $e3$, and $e4$. Such a graph represents
the black-box model that is used to generate test cases in~\cite{memon07}. The
idea is that a path in the EFG encodes a sequence of user interactions. The
marking of $e_1$, $e_2$, and $e_3$ as initial nodes encodes how a user
interaction can start. In our approach we automatically infer an EFG using
reverse engineering~\cite{memon03}. Moreover, we apply our tool, called
\emph{Gazoo}~\cite{arlt12}, in order to extract the event handlers of each
event.

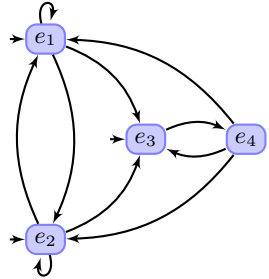
\begin{figure}
	\centering
	\begin{tikzpicture}[auto,node distance=1.333cm,->,>=latex',thick,bend
angle=25,/tikz/initial text=,scale=.5]
	\tikzstyle{event}=[rectangle,draw=blue!50,fill=blue!20,rounded corners]

	\node[event,initial] (e1) [] {$e_1$};
	\node[event,initial] (e3) [right of=e1,below of=e1] {$e_3$};
	\node[event,initial] (e2) [left of=e3, below of=e3] {$e_2$};
	\node[event] (e4) [right of=e3] {$e_4$};
	
	\path (e1) edge[loop above] node[] {} (e1)
	      (e1) edge[bend left] node[] {} (e2)
	      (e1) edge[bend left] node[] {} (e3)
	      (e2) edge[loop below] node[] {} (e2)
	      (e2) edge[bend left] node[] {} (e1)
	      (e2) edge[bend right] node[] {} (e3)
	      (e3) edge[bend left] node[] {} (e4)
	      (e4) edge[bend right] node[] {} (e1)
	      (e4) edge[bend left] node[] {} (e2)
	      (e4) edge[bend left] node[] {} (e3);
\end{tikzpicture}
	\caption{Event Flow Graph for the example application. Each event can follow
	each other event except that $e_4$ can follow only after $e_3$. A path in the EFG encodes a
	sequence of user interactions. The marking of initial nodes encodes that a user
	interaction can start with $e_1$, $e_2$, $e_3$, but not $e_4$.}
	\label{fig:example-efg}
\end{figure}

The set of event handlers and the EFG serve as input for the translation of the
GUI application into a verifiable program. That is, we apply our tool, called
\emph{Joogie}~\cite{joogie}, in order to construct the program depicted in
Figure~\ref{lst:example-boogie} which represents the event handlers and a
\emph{mock-up} of the message loop as far it is represented by the EFG. The
program starts with the block \texttt{START} which initializes the used
variables. Furthermore, the block \texttt{START} provides a \texttt{goto}
statement which allows the non-deterministic choice of the blocks \texttt{e1},
\texttt{e2}, and \texttt{e3}. The blocks that can be chosen within the block \texttt{START}
conform to the initial events that can be chosen from the EFG. In this program,
each event handler is encoded as a set of blocks. For example, the event $e_1$
is encoded in block \texttt{e1}, and event $e_2$ is encoded in the blocks
\texttt{e2}, \texttt{e2\_THEN}, \texttt{e2\_ELSE}, and \texttt{e2\_ENDIF}. The
last block of each event handler contains a \texttt{goto} statement which allows
a non-deterministic choice of possible succeeding events according to the EFG.
The block \texttt{EXIT} contains a property, encoded as an assertion, which must
hold after the execution of each event handler (the \texttt{goto} statement in
each event handlers provides the possibility of choosing the block
\texttt{EXIT}).

\begin{figure}
	\centering
	\input{lst/example-boogie}
	\caption{A code snippet of the constructed (Boogie) program (of the EFG and the
	extracted event handlers). The procedure \texttt{EFG\_Procedure} defines a set
	of blocks for the event handlers $e_1$, $e_2$, $e_3$, and $e_4$. The event
	handler $e_1$ assigns the constant value $1$ to variable \texttt{x} (line~8).
	The event handler $e_2$ multiplies the value of variable \texttt{x} with $2$
	(line~12). If the value of variable \texttt{x} is greater than $4$, then the
	event handler $e_2$ disables the button which can fire the event $e_3$
	(line~17). The event handler $e_3$ decrements variable \texttt{x} and
	opens the dialog (line~28-29). The event handler $e_4$ assigns the constant
	value $1$ to variable \texttt{x} and closes the dialog (line~33-34).}
	\label{lst:example-boogie}
\end{figure}

In the third step of our approach we apply a static analysis, called
\mc~\cite{ermis12}, on the program constructed in the previous step. We inject
different assertions (that is, properties of the GUI application) and show the
outcome of the static analysis.

If we inject the assertion \texttt{assert(x != 7)}, the static analysis outputs
the result \texttt{UNSAFE}, that is, a violation of this assertion is found in
the program. In particular, the static analysis outputs the shortest sequence of
events which leads to the violation, namely the event sequence
\es{$e_1$}~\es{$e_2$}~\es{$e_2$}~\es{$e_2$}~\es{$e_3$}.

In order to validate whether this event sequence is actually executable, we
replay the event sequence on the GUI application using a replayer. That is, we
apply the tool \emph{GUITAR}~\cite{memon07}, which takes as input a sequence of
events to mimmick user interactions in a black-fashion on the GUI application.
The result of the replayer states, whether the event sequence was successfully
replayed or not.

In the case of the assertion \texttt{assert(x != 7)}, the event sequence
\es{$e_1$}~\es{$e_2$}~\es{$e_2$}~\es{$e_2$}~\es{$e_3$} is not executable on the
GUI. The reason is that the last event \es{$e_2$} in the event sequence disables
the event \es{$e_3$} (which refers to line~17) in
Figure~\ref{lst:example-boogie}. The fact that the event sequence is not
executable causes our approach to automatically refine the used EFG.

The refinement step of our approach modifies the EFG, such that the detected
non-executable event sequence (encoded as a path in the graph) is not contained
in a refined EFG. In order to achieve this goal, we first convert both the EFG
and the non-executable event sequence into automata, which allows us to apply
operations from regular languages on these automata. In particular, the
non-executable event sequence is encoded as an accepting word of the EFG
automaton. We construct the complement of the accepting word and intersect it
with the EFG automaton. The result is a refined EFG automaton which does not
accept the non-executable event sequence. Finally, we convert the refined EFG
automaton back into an EFG, which we call \emph{Extended EFG} as depicted in
Figure~\ref{fig:eefg}.

The extended EFG is now used to construct a new program. The static analysis is
again applied to verify whether the assertion \texttt{assert(x != 7)} is
violated. In this case, the output of the static analysis is \texttt{SAFE},
since there exist no sequence of events in the program which leads to the
violation of the property.

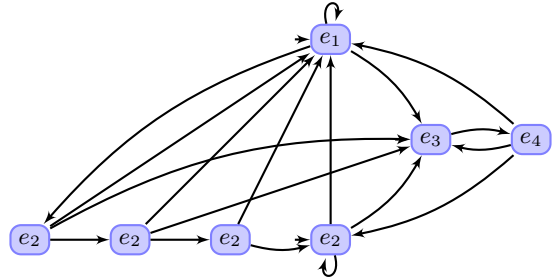
\begin{figure}
	\centering
	\begin{tikzpicture}[auto,node distance=1.333cm,->,>=latex',thick,bend
angle=15,/tikz/initial text=,scale=.5]
	\tikzstyle{event}=[rectangle,draw=blue!50,fill=blue!20,rounded corners]

	\node[event,initial] (e1) [] {$e_1$};
	\node[event,initial] (e3) [right of=e1,below of=e1] {$e_3$};
	\node[event,initial] (e2) [left of=e3, below of=e3] {$e_2$};	
	\node[event] (e23) [left of=e2] {$e_2$};
	\node[event] (e22) [left of=e23] {$e_2$};
	\node[event] (e21) [left of=e22] {$e_2$};
	\node[event] (e4) [right of=e3] {$e_4$};
	
	\path (e1) edge[loop above] node[] {} (e1)
	      (e1) edge[bend right] node[] {} (e21)
	      (e1) edge[bend left] node[] {} (e3)
	      (e21) edge[] node[] {} (e1)
	      (e21) edge[] node[] {} (e22)
	      (e21) edge[bend left] node[] {} (e3)
	      (e22) edge[] node[] {} (e1)
	      (e22) edge[] node[] {} (e23)
	      (e22) edge[] node[] {} (e3)
	      (e23) edge[] node[] {} (e1)
	      (e23) edge[bend right] node[] {} (e2)
	      (e2) edge[loop below] node[] {} (e2)
	      (e2) edge[] node[] {} (e1)
	      (e2) edge[bend right] node[] {} (e3)
	      (e3) edge[bend left] node[] {} (e4)
	      (e4) edge[bend right] node[] {} (e1)
	      (e4) edge[bend left] node[] {} (e2)
	      (e4) edge[bend left] node[] {} (e3);

\end{tikzpicture}
 	\caption{Extended Event Flow Graph for the example GUI. Comparing to the
 	original EFG, the event sequence $e_1, e_2, e_2, e_2, e_3$ is no longer
 	possible.}
	\label{fig:eefg}
\end{figure}

For completeness we inject one further assertion: For the property
\texttt{assert(x != 3)}, the static analysis outputs the result \texttt{UNSAFE}
and the event sequence \es{$e_1$}~\es{$e_2$}~\es{$e_2$}~\es{$e_3$} which leads
to the violating state. Unlike our first example, this event sequence is indeed
executable on the GUI application without a further iteration of the refinement
loop.

\section{Approach}
\label{sec:approach}

The runtime behavior of event-driven applications is determined by the
occurrence of events. In our context, events can be triggered by the user
interacting with the graphical user interface of the application. The GUI
restricts the events that may be executed at any given time by, for example,
enabling and disabling controls or displaying modal dialog boxes. An event flow
graph effectively encodes those restrictions.

The goal of a tester is to determine whether, among the set of event sequences
encoded by the event flow graph, all those that are \emph{executable} in the
application satisfy a property of interest. That is, whether they are
\emph{correct}. It is evident that testing cannot efficiently provide a
guarantee that all test sequences are correct. We thus propose to encode the set
of sequences represented by the EFG as a program, which traces correspond to
event sequences of the EFG. The program can then be statically analyzed to
efficiently determine whether all traces are correct, from which we can deduce
that all sequences of the EFG are also correct. If a violating sequence is
found, we use a replayer to dynamically check whether the sequence found is an
executable one. In order to automate the process, a non-executable event
sequence triggers a refinement loop where the EFG is modified to exclude all
event sequences that start with the one found by the static analyzer.

The verification algorithm (see Algorithm~\ref{alg:verify}) of our approach
takes as input a system $S$ and an event flow graph $E$. The system $S$ contains
the GUI application to be tested (including the properties to be checked), the
underlying GUI toolkit, the operating system, etc. The EFG $E$ can be created
manually or automatically~\cite{memon03}. The algorithm starts with the
extraction of the set of event handlers using the system $S$ and the EFG $E$;
see line~2 and Section~\ref{sec:ripper}. Then, the algorithm constructs a
program $P_{EFG}$ which represents a mock-up of the message loop by employing
the set of event handlers $H$ and the EFG $E$; see line~4 and
Section~\ref{sec:translation}. In line~5, a static analysis is applied which
returns a sequence of events $\seq$; see Section~\ref{sec:modelchecker}. If the
event sequence $\seq$ is empty, then the static analysis did not find a
violation of a property, and \textit{success} is returned from the verification
algorithm (line~14). In contrast, if the event sequence $\seq$ is not empty,
then the static analysis found an event sequence which violates a property.
Thus, the violating event sequence $\seq$ is replayed on the system $S$; see
line~7, Section~\ref{sec:replayer}. The replayer returns a boolean flag
indicating the executability of the sequence, and the non-executable prefix
$\seq_\infeasible$ of the sequence $\seq$. If $\seq$ was executable, the
verification algorithm returns \textit{fail} (line~11). If $\seq$
was not executable, the non-executable prefix $\seq_\infeasible$ of the event
sequence $\seq$ is removed from the EFG; see line~9 and
Section~\ref{sec:refinement}. The result of the refinement is an extended EFG,
which is used to construct a new program $P_{EFG}$.

\LinesNumbered
\begin{algorithm}
\KwIn{$S$ : System, $E$ : EFG}
\KwOut{(\texttt{fail}, $\seq$) or (\texttt{success}, \textit{null}) 
  \mbox{ \ \ \ \ \  ($\seq$ is \textit{null}} or an event sequence in $E$)}
\DontPrintSemicolon
\Begin{
	$H$ := ExtractEventHandlers($S$, $E$) \;
	\While{true}{
		$P_{EFG}$ := BuildMessageLoop($H$, $E$) \;
		$\seq$ := RunStaticAnalysis($P_{EFG}$) \;
		\eIf{$\seq \neq \textit{null}$}{
			$(IsExec, \seq_\infeasible)$ := ReplaySequence($S$, $\seq$) \;
			\eIf{$IsExec$ = false}{
				$E$ := RefineEFG($E$, $\seq_\infeasible$) \;
			}{
				\Return (\texttt{fail}, $\seq$) \;
			}
		}{
			\Return (\texttt{success}, \textit{null}) \;
		}
	}
}
\caption{\emph{Verification} Algorithm.}
\label{alg:verify}
\end{algorithm}

We start by formalizing the problem of verification of GUI applications. We then
describe the details of the steps involved in our proposed algorithm.

\subsubsection{Extended Flow Graphs}
\label{sec:prelim}
In the literature~\cite{memon07}, an event flow graph is defined
as a directed graph $G = \langle \eventset, \initset, \efgedge \rangle$ where
$\eventset$ is the set of events, $\initset \subseteq \eventset$ is the set of
initial events and $\efgedge \subseteq \eventset \times \eventset$ is the event
flow relation.
An edge $(\event,\event') \in \efgedge$ between two events $\event, \event' \in
\eventset$ states that the event $\event'$ can be executed after the event
$\event$.
If there is no edge between events $\event, \event'$ then event $\event'$ cannot
be executed after event $\event$.

In order to support our refinement approach, we extend the classical definition
of event flow graphs to enable having multiple nodes labeled with the same
event.
\begin{definition}[Extended Event Flow Graph]
\label{def:eefg}
An extended event flow graph is a labeled, directed graph $\EFG = \langle
\locations, \initloc, \efgedge, \labeling \rangle$ where $\locations$ is the set of locations, $\initloc$ the
set of initial locations, $\efgedge \subseteq \locations \times \locations$ the
transition relation and $\labeling: \locations \mapsto \eventset$ a labeling
function that assigns an event from the set of events $\eventset$ to every
location in $\locations$. 
\end{definition}
We additionally define the set $\paths(\EFG)$ as the set of sequences of
the form $\pi=\location_0,\location_1,\ldots,\location_n$ where $\location_0 \in
\initloc$, $\location_1,\ldots,\location_n \in \locations$ and
$\forall i \in \{0,\ldots,n-1\}: (\location_i,location_{i+1}) \in \efgedge$.

An extended event flow graph (EEFG) thus encodes a set of event sequences that
are considered \emph{possible} in an application. That is, that the user
interactions represented by the event sequence can be observed during the
execution of the application.

\begin{definition}[Possible Event Sequence]
\label{def:eventseq}
Let $\mathit{Paths}(\EFG)$ be the set of all paths in the extended event flow
graph $\EFG = \langle \locations, \initloc, \efgedge, \labeling \rangle$. A
sequence of events $\sigma = e_0,e_1,\ldots,e_n$ is called a \textbf{possible
event sequence} if and only if there exists a path
$\pi=\location_0,\location_1,\ldots,\location_n \in \paths(\EFG)$ such that
$\forall i \in \{0,\ldots,n\}: \labeling(\location_i) = e_i$.
\end{definition}
For brevity, we denote all possible event sequences for a given EEFG $\EFG$ with
the set $\possible(\EFG)$.

\subsubsection{Verification of GUI Applications}
As usual, we define the states of an application as pairs $\progstate = \langle
\location,\valuation \rangle$ where $\location$ is the value of the program
counter (the program \emph{location}) and $\valuation$ is a valuation function
that assigns values to the program's variables in their corresponding domains. There is a special state
denoted by $\initstate = \langle \hat{\location_0}, \hat{\valuation_0}\rangle$
that is the state corresponding to the application's entry point and the
initial variable valuation.

An application trace $\trace = \progstate_0,\progstate_1,\ldots$ is a sequence
of states where $\progstate_0 = \initstate$ and every $\progstate_i = \langle
\location_i,\valuation_i \rangle$, $i>1$ is the effect of executing the
instruction at location $\location_{i-1}$ under valuation $\valuation_{i-1}$. We
assume the usual operational semantics of program instructions.

Some intermediate states $\progstate_i$ in a trace correspond to the entry
points of \emph{event handlers}. Event handlers are program functions that are
executed whenever an interaction with the graphical interface (an \emph{event})
is triggered. We define the \emph{handler map} function $\handlermap: \eventset
\mapsto \identifier$ where $\eventset$ is the set of events of the graphical
interface and $\identifier$ the set of locations that correspond to the entry
points of the functions of the application.

For simplicity, and without loss of generality, we consider only events whose
handlers change the program state without considering user inputs such as, e.g.,
scroll bars or text boxes. Those events that read user inputs can be replaced by
a family of events where there is one for every possible input value.
With that consideration we can construct the trace that corresponds to an event
sequence.
\begin{definition}[Traces from event sequences] \label{def:tracefromseq}
Let $\seq = e_0,e_1,\ldots,e_n$ be an event sequence, the corresponding trace is
\[\trace_\seq = \initstate,
\progstate_{init_1},\ldots,\progstate_{e_0}=\langle\handlermap(e_0),
\valuation\rangle,\progstate_{0,1},\ldots,
\progstate_{e_1}=\langle\handlermap(e_1),
\valuation\rangle,\progstate_{1,1},\ldots \] where the states
$\progstate_{init_1}$ are part of the \emph{initialization sequence} of the
application, states $\progstate_{e_i}$ are the states at the entry point of the
handler of event $i$ and the states $\progstate_{i,j}$ are the states resulting
from executing instruction $j-1$ of the event handler $\handlermap(e_i)$.
\end{definition}

In our setting, \emph{assertions} are special instructions of the form
\texttt{assert}$(\mathit{expr})$ that have no effect on valuations and simply
update the program counter to the following instruction. Arguments to assertions are
boolean formulas over the application variables. A state $\progstate
= \langle \location,\valuation \rangle$ satisfies the assertion $a$ at location
$\location_a$ if and only if $\location = \location_a$ and the interpretation of
$\mathit{expr}$ under valuation $\valuation$ is $\mathit{true}$ (denoted
$\progstate \models a$). We also introduce $\assertions(\app)$ to
denote the set of assertions in the code of application $\app$.

An event sequence $\seq$ is called \emph{correct} if and only if every state
$\progstate_i$ in $\trace_\seq$ satisfies all assertions in the application's
code.

A GUI application, due to the restrictions imposed by its graphical interface,
defines a set of event sequences that can be executed. We denote the set of
executable event sequences of an application $\app$ with $\exec(\app)$. We thus
redefine the notion of correctness. 

\begin{definition}[Application Correctness] \label{def:correctness}
A GUI application $\app$ is considered \textbf{correct} with
respect to an EEFG $\EFG$ if and only if all possible event sequences of the
event flow graph that are executable, are correct. I.e.,
\[
\forall \seq \in \possible(\EFG) \cap \exec(\app) : \forall a \in
\assertions(\app) : \seq \models a
\]
\end{definition}
We thus aim to provide a guarantee of correctness with respect to an EEFG by
applying the algorithm steps described in detail in the following sections.

\subsection{Extracting Event Handlers}
\label{sec:ripper}

Each event in a GUI (e.g., a click on a button \texttt{OK}) is encoded as an
\emph{event handler} (e.g., \emph{OnClickOK}). The step Event Handler Extraction
mixes aspects of black-box and white-box approaches in order to extract the
GUI's event handlers. First, as in a black-box approach, we execute the GUI
application and enumerate the GUI's widgets (e.g., windows, buttons, and text
fields). Then, leaving a pure black-box approach, we apply
Reflection\footnote{\url{http://java.sun.com/developer/technicalArticles/ALT/Reflection/}}
in order to obtain the Java object corresponding to each widget (e.g.,
\texttt{JButton}. We ask the Java object to invoke the method
\texttt{getActionListeners}. The method invocation returns the widget's event
handlers (which are called \emph{action listeners} in Java). In order to have a
unique identifier for the event, we use the \emph{widget ID} of the GUI
application. If the Java object does not provide the method
\texttt{getActionListeners}, there exists no registered event handler to this
widget. In this case, we simply discard the handler event for this event.
However, in the program construction the is-followed-by relation between events
is considered.

\subsubsection{Discussion}
We use a dynamic approach to extract the event handlers. In principle it would
be possible to extract these event handlers via a static approach by analyzing
the source code. However, since GUI code is written in so many ways, looking for
\emph{ActionListeners} did not suffice as the event handlers might also be
registered using callbacks, virtual function calls, or even external resource
files.

\subsection{Program Transformation}
\label{sec:translation}
In order to encode the testing problem as a verification task, we construct a
Boogie~\cite{boogie} program that incorporates the information of an EFG to
simulate the message loop of the underlying GUI toolkit.  

Given an EEFG $\EFG = \langle \locations, \initloc, \efgedge, \labeling
\rangle$, the program $\pefg$ is constructed by first inlining all
initialization functions to make sure the application's variables have their
initial values set correctly before simulating the message loop of the GUI
toolkit. Subsequently, entry and exit labels for the message loop are added to
the program. For each location $\location$ in the EFG, the corresponding event
handler $\handlermap(\labeling(\location))$ is inlined. At the end of the block,
a non deterministic \emph{goto} statement is added. The target labels of the
\emph{goto} statement are the labels of the locations in the set $\{\location_i
\suchthat (\location,\location_i) \in \efgedge \} \cup \{\mathit{EFG\_Exit}\}$.
The jump to the exit label is necessary to simulate event sequences of finite
length in the case of a cyclic EFG. Assertions in the application code are
translated as assertions in $\pefg$.

To better illustrate the program transformation, the control flow graph of the
program generated for the example in Section~\ref{sec:motivation} is shown in
Figure~\ref{fig:pefg}, a simplified version of the source code is shown in
Figure~\ref{lst:example-boogie}. Note the similarity of the control flow graph
of the generated program and the original EFG (see
Figure~\ref{fig:example-efg}).
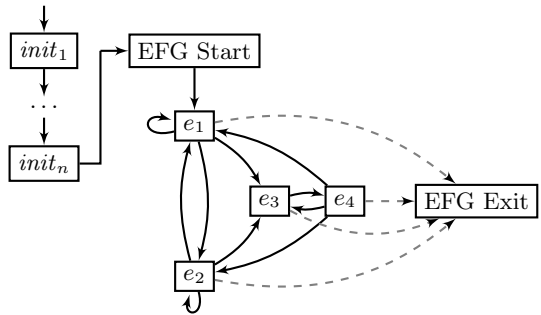
\begin{figure}
\centering
\begin{tikzpicture}[auto,->,>=latex',thick,bend
angle=15,/tikz/initial text=,scale=.5]
	\tikzstyle{event}=[rectangle,draw=black]
	\tikzstyle{every node}=[node distance = 0.75cm]
	\node[coordinate] (start) {};
	\node[event] (init1) [below of=start, node distance=0.6cm] {$\mathit{init}_1$};
	\node[] (init2) [below of=init1] {$\ldots$};
	\node[event] (inits) [below of=init2] {$\mathit{init}_n$};
	\node[event] (efginit) [right of=init1, node distance = 2cm] {EFG Start};
	\tikzstyle{every node}=[node distance = 1cm]
	\node[event] (e1) [below of = efginit] {$e_1$};
	\node[event] (e3) [right of=e1,below of=e1] {$e_3$};
	\node[event] (e2) [left of=e3, below of=e3] {$e_2$};
	\node[event] (e4) [right of=e3] {$e_4$};
	\node[coordinate] (efginitanchor) [left of=efginit, node distance = 1.25cm] {};
	\node[event] (efgexit) [right of=e4, node distance = 1.75cm] {EFG Exit};
	
	\path 	(start) edge[] node[] {} (init1)
			(init1) edge[] node[] {} (init2)
			(init2) edge[] node[] {} (inits)
			(efginit) edge[] node[] {} (e1)
	      ;
	\path[draw,-] (inits) -| (efginitanchor) edge[->] node[] {} (efginit);
	\path (e1) edge[loop left] node[] {} (e1)
	      (e1) edge[bend left] node[] {} (e2)
	      (e1) edge[bend left] node[] {} (e3)
	      (e2) edge[loop below] node[] {} (e2)
	      (e2) edge[bend left] node[] {} (e1)
	      (e2) edge[bend right] node[] {} (e3)
	      (e3) edge[bend left] node[] {} (e4)
	      (e4) edge[bend right] node[] {} (e1)
	      (e4) edge[bend left] node[] {} (e2)
	      (e4) edge[bend left] node[] {} (e3);
	\tikzstyle{every edge}=[dashed,->,draw=black!50, bend angle=25, >=latex']
	\path 	(e1) edge[bend left] node[] {} (efgexit)
			(e2) edge[bend right] node[] {} (efgexit)
			(e3) edge[bend right] node[] {} (efgexit)
			(e4) edge[] node[] {} (efgexit)
			;
\end{tikzpicture}
\caption{Control flow graph of the program generated for the example of
Section~\ref{sec:motivation}. The nodes correspond to the inlined control flow
graphs of the initialization functions and the event handlers.}
\label{fig:pefg}
\end{figure}

\begin{theorem}
Let $\EFG$ be an EEFG and $\traces(\pefg)$ be the set of traces feasible in the
program $\pefg$. For every possible event sequence in $\EFG$, the corresponding application trace is feasible in the program $\pefg$. I.e.,
\[
\forall \seq \in \possible(\EFG): \trace_\seq \in \traces(\pefg)
\]
\end{theorem}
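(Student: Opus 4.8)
The plan is to proceed by induction on the length of the event sequence $\seq = e_0, e_1, \ldots, e_n$, tracking a correspondence between a path $\pi = \location_0, \location_1, \ldots, \location_n$ witnessing $\seq \in \possible(\EFG)$ (which exists by Definition~\ref{def:eventseq}) and a prefix of a program trace in $\pefg$. The key invariant I would maintain is: after the program has executed the inlined initialization blocks followed by the inlined handler blocks for $\handlermap(\labeling(\location_0)), \ldots, \handlermap(\labeling(\location_i))$, the control location reached is the terminal (nondeterministic \emph{goto}) block of the inlined handler for $\location_i$, and the valuation reached coincides with the valuation $\valuation$ appearing in the state $\progstate_{e_i}$-to-$\progstate_{e_{i+1}}$ segment of $\trace_\seq$ as described in Definition~\ref{def:tracefromseq}. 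Here I rely on the fact that the handler code inlined at location $\location_i$ is literally the code of $\handlermap(\labeling(\location_i)) = \handlermap(e_i)$, so the instruction-by-instruction effect on the valuation is identical to the corresponding segment of $\trace_\seq$ by the assumed operational semantics.

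First I would set up the base case: the program $\pefg$ begins by inlining all initialization functions, so the trace of $\pefg$ starts with exactly the initialization segment $\initstate, \progstate_{init_1}, \ldots$ of $\trace_\seq$, ending at the \texttt{EFG Start} block. Since $\location_0 \in \initloc$, the nondeterministic \emph{goto} at \texttt{EFG Start} has the block for $\location_0$ among its targets (by construction, the targets from \texttt{EFG Start} are the initial locations), so the trace can continue into the handler block for $e_0 = \labeling(\location_0)$. For the inductive step, assume the invariant holds after processing $\location_i$; the trace currently sits at the terminal \emph{goto} block of the inlined handler for $\location_i$. Since $(\location_i, \location_{i+1}) \in \efgedge$, the target set $\{\location_j \suchthat (\location_i, \location_j) \in \efgedge\} \cup \{\mathit{EFG\_Exit}\}$ of that \emph{goto} contains $\location_{i+1}$, so the trace can branch into the inlined handler block for $\location_{i+1}$; executing that block instruction by instruction reproduces the segment $\progstate_{i,1}, \ldots, \progstate_{e_{i+1}}$ of $\trace_\seq$ and lands at its terminal \emph{goto} block, re-establishing the invariant. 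After processing $\location_n$, the terminal \emph{goto} has $\mathit{EFG\_Exit}$ among its targets, so the trace can be completed to a full feasible trace of $\pefg$ (reaching \texttt{return}); stitching all the segments together yields exactly $\trace_\seq$, hence $\trace_\seq \in \traces(\pefg)$.

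The main obstacle I anticipate is not the combinatorial bookkeeping but making precise the claim that the segment of $\trace_\seq$ between $\progstate_{e_i}$ and $\progstate_{e_{i+1}}$ really does coincide, state for state, with what the inlined block in $\pefg$ produces. This requires arguing that (a) inlining does not alter the semantics of the handler body — i.e., the fresh copy of $\handlermap(e_i)$'s code behaves identically on valuations, which is the standard correctness of inlining and should be invoked rather than reproved — and (b) that the valuation at the \emph{entry} of the handler for $e_{i+1}$ in $\pefg$ equals the valuation at $\progstate_{e_{i+1}}$ in $\trace_\seq$, which follows because in both cases nothing happens to the valuation between the terminal \emph{goto} of handler $e_i$ and the entry of handler $e_{i+1}$ (the \emph{goto} is a pure control transfer, as are the block labels). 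A secondary subtlety worth a sentence is that Definition~\ref{def:tracefromseq} glosses the handler-internal states as ``$\progstate_{i,j}$ resulting from executing instruction $j-1$ of $\handlermap(e_i)$'' without committing to a fixed intra-handler control flow; since handlers may themselves branch (as $e_2$ does in the example), I would read $\trace_\seq$ as \emph{the} trace determined by the deterministic execution of each handler on its entry valuation, and note that the corresponding branch is always available in $\pefg$ because the handler code is copied verbatim, assumptions and all.
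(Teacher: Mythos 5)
Your proposal is correct and follows essentially the same route as the paper's own (sketched) proof: both establish a correspondence between the states of the application trace $\trace_\seq$ and the states of $\pefg$, argue inductively that each step is justified by the construction of $\pefg$, and split into the initialization phase and the per-handler segments, with the nondeterministic \emph{goto} targets (including $\mathit{EFG\_Exit}$) supplying the control transfers dictated by the EFG path. Your version is simply a more fleshed-out account of that sketch, and your remarks on inlining correctness and the underspecified intra-handler control flow in Definition~\ref{def:tracefromseq} are reasonable clarifications rather than a departure.
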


\begin{proof}
(sketch) For an arbitrary sequence $ \seq = e_0,e_1,\ldots,e_n \in
\possible(\EFG)$, use Definition~\ref{def:tracefromseq} to produce the corresponding trace
$\trace_\seq = \progstate_0, \progstate_1, \ldots, \progstate_m$. By the
construction of $\pefg$, define a mapping from states of $\pefg$ to
states in the application. Also map instruction labels in the application
correspond to instruction labels in $\pefg$. 

Assuming $\progstate_0 = \initstate$, every subsequent application state is
justified by an instruction in the application, show by the construction of
$\pefg$ that the mapped program state is justified by the corresponding
instruction in $\pefg$. Split into cases for the initialization phase and the
execution of event handlers.
\end{proof}

\begin{corollary}\label{cor:violatingtrace}
If a trace $t$ in $\traces(\pefg)$ violates an assertion $a$ (denoted $t
\models a$, the corresponding event sequence calculated using the mappings from
the previous proof and the inverse of Definition~\ref{def:tracefromseq} is not correct.
\end{corollary}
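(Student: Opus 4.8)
The plan is to run the construction of the preceding theorem in reverse. Recall that the theorem's proof fixes a state-correspondence between $\pefg$ and the application together with a matching of instruction labels, and shows that every application trace $\trace_\seq$ for $\seq \in \possible(\EFG)$ is realized step-for-step in $\pefg$. The first thing I would do is observe that this correspondence is not merely one-directional: because each event handler is inlined verbatim into $\pefg$ (Section~\ref{sec:translation}) and the only instructions added are the non-deterministic \emph{goto} statements between handler blocks, every trace $t \in \traces(\pefg)$ decomposes into an initialization segment followed by a sequence of handler-block executions, and the sequence of blocks visited spells out a path in $\EFG$ — hence, via $\labeling$, a possible event sequence $\seq_t \in \possible(\EFG)$. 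This is exactly the event sequence obtained by applying the mappings from the theorem's proof together with the inverse of Definition~\ref{def:tracefromseq}; I would note that this inverse is well-defined on $t$ precisely because the block structure of $\pefg$ records where each handler begins and ends.

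Next I would argue that $t$ and $\trace_{\seq_t}$ agree on the program-variable valuations at every corresponding state. The added \emph{goto} statements have no effect on valuations, the inlined initialization code is identical on both sides, and each inlined handler executes the same instructions under the same operational semantics as in the application; so by a straightforward induction along $t$ (mirroring the induction in the theorem's proof, but read from $\pefg$ to the application) the valuation at each state of $t$ equals the valuation at the corresponding state of $\trace_{\seq_t}$. In particular, the instruction-label matching sends the assertion location $\location_a$ in $\pefg$ back to the assertion location $\location_a$ in the application, and the expression $\mathit{expr}$ guarded by $a$ is the same formula over the same variables on both sides.

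Now suppose $t$ violates $a$, i.e. some state $\progstate_i$ of $t$ has $\progstate_i \not\models a$, meaning $\progstate_i$ sits at $\location_a$ and evaluates $\mathit{expr}$ to $\mathit{false}$. By the valuation agreement just established, the corresponding state $\progstate'_i$ of $\trace_{\seq_t}$ also sits at $\location_a$ with the same valuation, so $\progstate'_i \not\models a$ as well. Hence $\trace_{\seq_t}$ contains a state failing an assertion in $\assertions(\app)$, and by the definition of a correct event sequence, $\seq_t$ is not correct, which is the claim.

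The main obstacle I expect is bookkeeping rather than conceptual: making the ``inverse of Definition~\ref{def:tracefromseq}'' genuinely well-defined on an arbitrary $\pefg$-trace. A trace of $\pefg$ is a sequence of program states with no explicit event annotations, so one must show that the partition of $t$ into initialization plus handler blocks is unique and that each block is a complete handler execution (no trace can stop or switch blocks in the middle of an inlined handler, since control inside a handler is deterministic up to its own internal branching and the only inter-handler edges are the added \emph{goto}s). Once that structural lemma is in place — and it follows directly from how $\pefg$ is built — the rest is the routine induction sketched above. I would also remark in passing that $t$ may fail to lie in $\exec(\app)$; the corollary only asserts incorrectness of $\seq_t$ as an event sequence, which is exactly what is needed to feed the replay/refinement step of Algorithm~\ref{alg:verify}.
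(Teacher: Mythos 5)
Your argument is correct and follows essentially the route the paper intends: the paper gives no explicit proof of this corollary, relying on the phrase ``the mappings from the previous proof and the inverse of Definition~\ref{def:tracefromseq}'', and your write-up simply spells out what that entails. The one substantive point you add — and rightly flag as the real work — is the well-definedness of the reverse direction: that any trace $t \in \traces(\pefg)$ decomposes uniquely into the initialization segment plus complete inlined handler blocks, that the visited blocks trace a path in $\EFG$ from an initial location (since the added \emph{goto}s target exactly the $\efgedge$-successors), so that $\seq_t \in \possible(\EFG)$, and that valuations and assertion locations agree state-for-state so the violation transfers to $\trace_{\seq_t}$. This is exactly the content the paper leaves implicit, and your closing remark that executability of $\seq_t$ on the application is \emph{not} claimed (and is instead checked by the replayer) correctly delimits what the corollary asserts.
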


In order to solve our problem of interest (see
Definition~\ref{def:correctness}), we apply static analysis to determine whether
a trace that violates an assertion exists. By
Corollary~\ref{cor:violatingtrace}, the result of the analysis solves the
problem
\[
\forall \seq \in \possible(\EFG) : \forall a \in \assertions(\app) : \seq
\models a
\]
If a violating event sequence is found, it only remains to check whether it is
executable.

\subsubsection{Implementation}
We implemented our translation approach for GUI applications written in Java, we
integrate the generation of models as an option in Joogie~\cite{joogie}, which
translates Java bytecode into Boogie. Our implemented translation is available for download
at \url{joogie.org}.

\subsection{Static Analysis}
\label{sec:modelchecker}
In the following we will explain how static analysis processes the produced
Boogie program to verify its safety wrt. assertions or to retrieve an event
sequence $\seq_\infeasible$.
For our purposes we need a tool that can handle complex looping structures in
programs well. In the case that the Boogie program has a feasible error trace,
the tool should return the shortest trace that leads to the error, thus ensuring
that the biggest possible trace sets are excluded from the EFG in every
refinement step.

Additionally, the counterexample must be complete in the sense that it must
document the error trace beginning from the initial location of the Boogie
program to the failing assertion. We use the model checker \mc~(a loose
homophone of ``codecheck'') which is based on \textsc{Ultimate}~\cite{ermis12}.
\mc~uses a splitting algorithm with large block encoding(LBE)~\cite{beyer09}.
This combination has proven itself to be very efficient on programs with complex
looping structures in program code. \mc~uses a breadth first search on the
program. Therefore it always finds the shortest failing trace in the program.
The interpolating splitting algorithm does not guess overapproximations of
loops. It rather constructs an invariant for each loop. The invariant is
constructed by splitting the loop using interpolation. A possible counterexample
is encoded in a FOL formula and passed over to the interpolating theorem prover
SMTInterpol~\cite{DBLP:conf/spin/ChristHN12}. If the formula is unsatisfiable
the model checker obtains Craig interpolants for each node on the spurious
counterexample and splits each node by partitioning its state space. On loops
the splitting can be considered as an unwinding. Hence, due to the splitting,
the counterexample is always complete in our sense and the corresponding trace
is easy to comprehend from initial location to error location. As such
\mc~fullfills all our requirements.
\paragraph{From Error Trace to Event Sequence.}
The model checker \mc~constructs the control flow graph(CFG) of the
Boogie program. As a preprocessing step the entire program is transformed into a
simple goto-program. This eases the mapping of control locations in the code to
the CFG. Each label becomes a node in the CFG.
All other statements are encoded as transitions between those nodes.
There exist 3 possible outcomes, (1) \textsc{Safe}, (2) \textsc{Unsafe}, and (3)
\textsc{Unknown} (Figure~\ref{fig:mcresults}).
\begin{figure}[h!] \centering
\begin{tabular}{l|p{10cm}l}
(1) \textsc{Safe} & \mc~has proved that the specified assertion is not violated
in the program. Hence, there is no error trace to derive an event sequence from.\\
\hline
(2) \textsc{Unsafe} & \mc~has found a counterexample and its corresponding
error trace.
\mc~will derive the event sequence $\seq_\infeasible$ from the error trace and
report that the assertion has been violated. The event sequence is returned as
the witness of the error.\\
\hline
(3) \textsc{Unknown} & \mc~is neither able to verify the program nor to find a
feasible counterexample. This can be caused e.g. by an unknown result from the
theorem prover or a timeout etc.
\end{tabular}
\caption{Result types of the model checker \mc~and the according returned
information to the Replayer.\label{fig:mcresults}}
\end{figure}
The result types (1) and (3) stop the computation and the overall algorithm. If
the result is \textsc{Unsafe}, the event sequence $\seq_\infeasible$ must be
derived from the error trace. The error trace is an alternating sequence of
nodes and transitions. According to our prior explanation, each node corresponds
to a label in the Boogie program. Specially marked labels of the Boogie program
correspond to event handlers. Hence, we can easily derive the event sequence
$\seq$ by mapping each marked label on the error trace back to its original
event. This event sequence is handed over to the Replayer.

\subsection{Replaying an Event Sequence}
\label{sec:replayer}

The replayer takes as input a sequence of events and embeds it into a GUI test
case. A GUI test case consists of four components: (1) a \emph{precondition}
that must hold before executing an event sequence; (2) the \emph{event sequence}
to be executed; (3) possible \emph{input-data} to the GUI; and (4) a
\emph{postcondition} that must hold after executing the event sequence. The step
Replayer ensures the precondition, executes the event sequence on the GUI
application, inserts input data where necessary, and checks if the postcondition
holds. For (1), as a precondition we define that all user settings of an AUT
have to be deleted before executing the event sequence.
For (3), we generate random data, i.e., random strings for text boxes. The
computation of suitable input values (see ~\cite{ganov09}) for widgets
represents an orthogonal problem and is not in the scope of this paper. For (4),
we use an \emph{executability monitor} as an oracle, that is, the GUI test case
is marked as passed if the event sequence is executable on the GUI application.
The GUI test case is marked as failed if the event sequence is not executable on
the GUI application. Furthermore, the replayer recognizes which event in the
event sequence is not executable anymore. Then, the replayer stops the execution
and extracts the shortest non-executable prefix of the event sequence. The
replayer forwards the prefix (a sequence of events) to the refinement procedure,
which removes the prefix from the used EFG.

\subsection{EFG Refinement}
\label{sec:refinement}

An extended event flow graph can be seen as a finite representation of the
regular language $\Lang$ over the alphabet of events $\eventset$ that contains
possible event sequences of the program being analyzed (see
Definition~\ref{def:eventseq}). We transform the extended event flow graph $\EFG
= \langle \locations, \initloc, \efgedge, \labeling \rangle$ into a
non-deterministic finite automaton (NFA) $\AF = \langle\Sigma, Q, q_0, F,
\Delta\rangle$ that accepts the regular language $\Lang$ where:
\begin{itemize}
  \item The input alphabet is the set of events. I.e., $\Sigma = \eventset$
  \item For every location $\location$ in $\EFG$ there is a state
  in the NFA. I.e., \\$Q = \{q_{\location} \suchthat \location \in
  \locations\}$
  \item An initial state $q_0$ is introduced.
  \item All states except for the initial state are accepting states, $F = Q
  \setminus q_0$
  \item The transition relation $\Delta \subseteq Q \times \Sigma \times Q$
  is built from the transition relation of $\EFG$. Every edge in $\efgedge$ has
  a corresponding entry in the transition relation of $\AF$. The successor event
  of the event flow relation is used as input symbol for the edge. Additionally,
  to account for multiple initial locations, an edge is introduced between the
  initial state and the states corresponding to those locations. I.e., 
  \begin{align*}
	    \Delta = & \left\{(q_{\location_1},e_2,q_{\location_2})
	  \suchthat (\location_1, \location_2) \in \efgedge \land \labeling(\location_2)
	  = e_2\right\} \\
	  & {} \cup \left\{ (q_0, e_i, q_{\location_i}) \suchthat \location_i
	  \in \initloc \land \labeling(\location_i) = e_i \right\}
  \end{align*}
  
\end{itemize}

\begin{lemma} \label{lemma:NFAEFG}
The language $\Lang$ accepted by $\AF$ is exactly $\possible(\EFG)$ .
\end{lemma}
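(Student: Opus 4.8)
The plan is to prove the two inclusions $\Lang \subseteq \possible(\EFG)$ and $\possible(\EFG) \subseteq \Lang$ separately, in both cases exploiting the tight structural correspondence between runs of $\AF$ and paths of $\EFG$ that is wired into the construction of $\Delta$.

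First I would record the shape of an arbitrary accepting run of $\AF$. Since $q_0$ is a fresh state, distinct from every $q_\location$, and every edge in $\Delta$ ends in some $q_\location$, the state $q_0$ has no incoming transition; hence any run reading a nonempty word $w = e_0 e_1 \cdots e_n$ must have the form $q_0 \xrightarrow{e_0} q_{\location_0} \xrightarrow{e_1} q_{\location_1} \cdots \xrightarrow{e_n} q_{\location_n}$, and the empty word is rejected because $q_0 \notin F$ (consistently, $\paths(\EFG)$ contains no empty path). The first transition, being out of $q_0$, must belong to the second set defining $\Delta$, which forces $\location_0 \in \initloc$ and $\labeling(\location_0) = e_0$. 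Each later transition $(q_{\location_{i-1}}, e_i, q_{\location_i})$ must belong to the first set, which forces $(\location_{i-1}, \location_i) \in \efgedge$ and $\labeling(\location_i) = e_i$. Therefore $\pi = \location_0, \location_1, \ldots, \location_n$ lies in $\paths(\EFG)$ with $\labeling(\location_i) = e_i$ for every $i$, so by Definition~\ref{def:eventseq} the word $w$ is a possible event sequence; conversely the run is accepting precisely because $q_{\location_n} \neq q_0$ and every non-initial state is accepting. This establishes $\Lang \subseteq \possible(\EFG)$.

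For the converse, given $\sigma = e_0, \ldots, e_n \in \possible(\EFG)$, I would pick a witnessing path $\pi = \location_0, \ldots, \location_n \in \paths(\EFG)$ with $\labeling(\location_i) = e_i$ for all $i$ and read off the candidate run $q_0 \xrightarrow{e_0} q_{\location_0} \xrightarrow{e_1} \cdots \xrightarrow{e_n} q_{\location_n}$, then check each step is legal: the initial step is in $\Delta$ because $\location_0 \in \initloc$ and $\labeling(\location_0) = e_0$; each step $(q_{\location_{i-1}}, e_i, q_{\location_i})$ for $i \geq 1$ is in $\Delta$ because $(\location_{i-1}, \location_i) \in \efgedge$ and $\labeling(\location_i) = e_i$. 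The run ends in $q_{\location_n}$, which is accepting since $q_{\location_n} \neq q_0$. Hence $\sigma \in \Lang$, and combining the two inclusions yields the claim.

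The argument is essentially bookkeeping; the only points that need care are (a) that $q_0$ is genuinely fresh with no incoming edges, so that accepting runs have exactly the normal form above and cannot re-enter $q_0$, and (b) the off-by-one alignment — every transition, including those out of $q_0$, is labeled by the event of its \emph{target} location, which is exactly what makes the $i$-th symbol read equal to $\labeling(\location_i)$ rather than $\labeling(\location_{i-1})$. I expect the main (and very mild) obstacle to be stating this alignment cleanly, most likely via a one-line induction on the length of the run.
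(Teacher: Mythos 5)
Your proof is correct and follows essentially the same route as the paper's own (sketched) argument: a step-by-step correspondence between accepting runs of $\AF$ and paths of $\EFG$, exploiting that transitions are labeled by the event of their target location. In fact you are more thorough than the paper, which only sketches the inclusion $\possible(\EFG) \subseteq \Lang$ by induction on prefixes, whereas you also verify the converse inclusion and the freshness of $q_0$ (no incoming edges, not accepting), which is exactly what makes the correspondence exact.
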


\begin{proof}
(sketch) Prove that any event sequence $\seq = e_1, e_2, \ldots, e_n \in
\possible(\EFG)$ is accepted by $\AF$. Apply induction over the prefixes of
$\seq$ and the description of the construction of $\AF$ given $\EFG$.
\end{proof}

A sequence of events $\seq_\infeasible = e_1, e_2, \ldots, e_n$ returned as
non-executable prefix of a violating sequence, is a word accepted by $\AF$ (see
Section~\ref{sec:translation}). If $\seq_\infeasible$ is a non-executable
sequence, we exclude from $\Lang$ the regular language $\infeasible$ of all
sequences prefixed with $\seq_\infeasible$, i.e., sequences of the form
$\seq.\Sigma^*$.

For that purpose, we construct an automaton $\As$ that accepts exactly those
sequences. By using the complement and intersection operations on finite
automata, we obtain a new automaton $\AF^\# = \AF \cap \bar{\infeasible}$ that
accepts exactly the language $\Lang^\# = \Lang \setminus \infeasible$. The new
automaton $\AF^\# = \langle\Sigma, Q^\#, q_0^\#, F^\#, \Delta^\#\rangle$ can be
transformed back into an EEFG $\EFG^\# = \langle \locations^\#, \initloc^\#,
\efgedge^\#, \labeling^\# \rangle^\#$ as follows:

\begin{itemize}
  \item There is a location for each state except for the initial state.\\ I.e.,
  $L^\# = \{\location_q \suchthat q \in Q^\# \setminus q_0^\# \}$
  \item Every location is labeled with the input event of its incoming
  edges.\footnote{By construction, it is guaranteed that all incoming edges to
  a location are labeled with the same event.} I.e.,
  $\labeling^\#(\location) = e$ where $e$ is the input event of all
  incoming edges of $\location$, $(q_i, e, q_\location) \in \Delta^\#$.
  \item The initial locations are those connected to the initial state. \\I.e.,
  $\initloc^\# = \{\location_q \suchthat (q_0^\#, e, q_\location) \in \Delta^\#
  \}$
  \item The transition relation is reconstructed from the transition relation of
  $\AF^\#$:
  \[
  	\efgedge^\# = \{(\location_1,\location_2) \suchthat (q_{\location_1},e,
  	q_{\location_2}) \in \Delta^\# \land q_{\location_1} \neq q_0^\#\}
  \]
\end{itemize}

\begin{lemma} \label{lemma:EFGNFA}
$\possible(\EFG^\#)$ is exactly the language $\Lang^\#$ accepted by $\AF^\#$.
\end{lemma}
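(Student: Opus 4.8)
The plan is to prove the two inclusions $\possible(\EFG^\#) \subseteq \Lang^\#$ and $\Lang^\# \subseteq \possible(\EFG^\#)$ by exhibiting, in each direction, a step-by-step correspondence between a path of $\EFG^\#$ and an accepting run of $\AF^\#$ — in other words, by checking that the back-translation from $\AF^\#$ to $\EFG^\#$ is exactly the inverse of the forward construction used for Lemma~\ref{lemma:NFAEFG}, so the argument reuses that lemma's inductive pattern over prefixes. Before starting I would isolate two structural invariants of $\AF^\#$ that the argument rests on: (i) all transitions of $\Delta^\#$ entering a fixed state carry the same input symbol (this is the property stated in the footnote, and it is precisely what makes $\labeling^\#$ well defined); and (ii) the initial state $q_0^\#$ has no incoming transition. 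Invariant (ii) holds because the fresh initial state of $\AF$ in the construction preceding Lemma~\ref{lemma:NFAEFG} has no incoming edge, and neither complementation of $\infeasible$ nor the product construction underlying $\AF^\# = \AF \cap \bar{\infeasible}$ creates one (the product's initial state is the pair of initial states, and an incoming edge there would require one into the initial state of $\AF$). Note also that the correspondence $q \mapsto \location_q$ between $Q^\# \setminus q_0^\#$ and $\locations^\#$ is a bijection; I write $q_{\location}$ for its inverse.

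For the inclusion $\possible(\EFG^\#) \subseteq \Lang^\#$, I would take $\seq = e_0, e_1, \ldots, e_n \in \possible(\EFG^\#)$, witnessed (Definition~\ref{def:eventseq}) by a path $\pi = \location_0, \location_1, \ldots, \location_n \in \paths(\EFG^\#)$ with $\location_0 \in \initloc^\#$ and $\labeling^\#(\location_i) = e_i$ for all $i$, and claim that $q_0^\#, q_{\location_0}, q_{\location_1}, \ldots, q_{\location_n}$ is an accepting run of $\AF^\#$ on $\seq$. The first transition $(q_0^\#, e_0, q_{\location_0})$ lies in $\Delta^\#$ because $\location_0 \in \initloc^\#$ yields some $(q_0^\#, e, q_{\location_0}) \in \Delta^\#$, and invariant (i) together with the definition of $\labeling^\#$ forces $e = \labeling^\#(\location_0) = e_0$. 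For $0 \le i < n$, $(\location_i, \location_{i+1}) \in \efgedge^\#$ yields, by the definition of $\efgedge^\#$, an edge $(q_{\location_i}, e, q_{\location_{i+1}}) \in \Delta^\#$ with $q_{\location_i} \ne q_0^\#$, and again $e = \labeling^\#(\location_{i+1}) = e_{i+1}$. Finally the run ends in $q_{\location_n}$, which lies in $F^\# = Q^\# \setminus q_0^\#$ since $\location_n \in \locations^\#$. Hence $\seq \in \Lang^\#$.

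For the converse $\Lang^\# \subseteq \possible(\EFG^\#)$, I would take $w = e_0, e_1, \ldots, e_n \in \Lang^\#$ with an accepting run $q_0^\#, r_1, \ldots, r_{n+1}$ of $\AF^\#$. By invariant (ii) no transition re-enters $q_0^\#$, so $r_1, \ldots, r_{n+1} \in Q^\# \setminus q_0^\#$; writing $r_{i+1} = q_{\location_i}$ with $\location_i \in \locations^\#$, set $\pi = \location_0, \ldots, \location_n$. The transition $(q_0^\#, e_0, q_{\location_0}) \in \Delta^\#$ puts $\location_0$ in $\initloc^\#$ and, being an incoming edge of $\location_0$, fixes $\labeling^\#(\location_0) = e_0$ by invariant (i). For $1 \le i \le n$ the run transition $(q_{\location_{i-1}}, e_i, q_{\location_i}) \in \Delta^\#$ has $q_{\location_{i-1}} \ne q_0^\#$, hence $(\location_{i-1}, \location_i) \in \efgedge^\#$, and being an incoming edge of $\location_i$ it fixes $\labeling^\#(\location_i) = e_i$. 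Thus $\pi \in \paths(\EFG^\#)$ with $\labeling^\#(\location_i) = e_i$ for all $i$, so by Definition~\ref{def:eventseq} $w \in \possible(\EFG^\#)$. (As a sanity check, $q_0^\# \notin F^\#$ makes the empty word absent from $\Lang^\#$, matching the fact that every path, and hence every possible event sequence, is nonempty.)

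I expect the only real subtlety to be invariants (i) and (ii) — in particular, making precise that $q_0^\#$ remains source-only through the complement-and-intersect construction, since otherwise edges into $q_0^\#$ would be silently dropped by the back-translation and could shrink the accepted language. Everything else is routine bookkeeping that mirrors Lemma~\ref{lemma:NFAEFG}, so I would present Lemma~\ref{lemma:EFGNFA} essentially as the dual of that lemma.
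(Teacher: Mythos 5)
Your proof is correct and takes essentially the same route as the paper, whose own proof is just a sketch (``analogous to Lemma~\ref{lemma:NFAEFG}, induction on prefixes and the construction of $\EFG^\#$''); your step-by-step path/run correspondence in both directions is exactly that argument carried out in full. The two structural invariants you isolate --- same-labeled incoming edges and the absence of transitions back into $q_0^\#$ --- make explicit what the paper's footnote and sketch leave implicit, and your justification that both survive the complement-and-intersect construction is sound.
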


\begin{proof}
(sketch) Analogous to Lemma~\ref{lemma:NFAEFG}. Prove that any sequence $\seq \in
\possible(\EFG)$ is accepted by $\AF^\#$ by induction on the prefixes of $\seq$
and the construction of $\EFG^\#$.
\end{proof}
Having defined a translation of event flow graphs to and from NFA enables us to
use the method to provide the refinement step of our algorithm (see
line 9, Algorithm~\ref{alg:verify}).

\begin{theorem}
The possible event sequences of the refined EEFG $\possible(\EFG^\#)$ are the
event sequences of the original EEFG without all sequences that start with
$\seq_\infeasible$. I.e., $\possible(\EFG^\#) = \possible(\EFG^\#) \setminus
\infeasible$.
\end{theorem}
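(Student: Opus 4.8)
The plan is to obtain the identity by composing the two language-preservation lemmas of Section~\ref{sec:refinement} with the textbook soundness of the automata operations used to build $\AF^\#$. (The right-hand side of the displayed equation is evidently meant to be $\possible(\EFG) \setminus \infeasible$, and it is this version that we prove.)

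First I would make explicit that $\infeasible$ is a regular language: it is $\{\seq_\infeasible\}\cdot\Sigma^*$, accepted by the automaton $\As$ that spells out $\seq_\infeasible$ along a simple chain and then loops forever on an accepting sink. Hence $\bar{\infeasible}$ is regular, the product automaton $\AF^\# = \AF \cap \bar{\infeasible}$ is well defined, and by the standard correctness of the complement and product constructions it accepts exactly $\Lang^\# = \Lang \setminus \infeasible$. I would also verify the one side condition needed for the map $\AF^\# \mapsto \EFG^\#$ of Section~\ref{sec:refinement} to apply, namely that every edge entering a state of $\AF^\#$ carries the same input symbol: a state of $\AF^\#$ is a pair $(q_\location, s)$, each of its incoming edges projects in the first component to an incoming edge of $q_\location$ in $\AF$, and by the construction of $\AF$ all incoming edges of $q_\location$ are labelled $\labeling(\location)$.

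The theorem then follows by the chain
\[
\possible(\EFG^\#) \;=\; \Lang^\# \;=\; \Lang \setminus \infeasible \;=\; \possible(\EFG) \setminus \infeasible ,
\]
where the first equality is Lemma~\ref{lemma:EFGNFA}, the middle equality is the correctness of the complement-and-intersect construction just discussed, and the last equality is Lemma~\ref{lemma:NFAEFG}. Moreover, since $\seq_\infeasible$ was returned as a prefix of a sequence found by the static analysis in $\pefg$ and is therefore a possible event sequence of $\EFG$ (Section~\ref{sec:refinement}), the removed language $\infeasible$ is a non-empty subset of $\possible(\EFG)$, so the refined EEFG is a strict restriction of the original; this is the progress property the refinement loop relies on.

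The argument is essentially bookkeeping, so the only real obstacle is keeping the two translation directions compatible: one must be sure that the NFA-to-EEFG translation of Section~\ref{sec:refinement} is actually applicable to $\AF^\#$ (the single-label-per-incoming-edge condition above) and that Lemma~\ref{lemma:EFGNFA} is invoked for precisely this $\AF^\#$ rather than for an NFA whose states happen to be locations. Note, finally, that the theorem does not claim termination of Algorithm~\ref{alg:verify}: each refinement deletes the infinitely many sequences in $\seq_\infeasible\Sigma^*$, but there is in general no bound on the number of refinements, which is consistent with the paper's characterization of its method as a ``complete solution'' that may fail to terminate.
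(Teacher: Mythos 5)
Your proposal is correct and follows essentially the same route as the paper's own proof sketch: it combines Lemma~\ref{lemma:NFAEFG}, Lemma~\ref{lemma:EFGNFA}, and the correctness of the complement-and-intersection construction of $\AF^\#$, and it rightly reads the stated equation as $\possible(\EFG^\#) = \possible(\EFG) \setminus \infeasible$ (the paper's right-hand side contains a typo). If anything, your chain of equalities, together with the check that the NFA-to-EEFG translation is applicable to $\AF^\#$, is slightly more complete than the paper's sketch, which only argues that every sequence in $\infeasible$ is possible in $\EFG$ and excluded from $\EFG^\#$.
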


\begin{proof}
(sketch) Use the definition of $\AF^\#$ together with lemmata \ref{lemma:NFAEFG}
and \ref{lemma:EFGNFA} to show: 
 \[\forall \seq \in \infeasible : \seq \in \possible(\EFG) \land \seq \not\in
 \possible(\EFG^\#)\]  
\end{proof}
The refined event flow graph can thus be used for the subsequent iteration of
our verification algorithm.

\section{Evaluation}
\label{sec:experiments}

We apply our prototype implementation of the verification algorithm presented to
some samples of the benchmark set \texttt{UNL.Toy.2010} from the COMET benchmark
repository\footnote{Available at \url{comet.unl.edu}}. We modified each
benchmark by adding a variable $x$ and assertions to the source code to specify
the reachability of some value, analogous to our motivating example.
For each sample we added an assertion that is never violated, and one where an executable
event sequence leads to a violation.

The results of our experiments are summarized in Table~\ref{table:experiments}.
We report the number of events; the injected assertion; the time needed for the
static analysis; the number of calls to the theorem prover; the number of
refinement iterations (that is, the number of sequences that were replayed by
our approach); the number of sequences of a pure black-box approach; and the final result of the
benchmark.

In the case of an \texttt{UNSAFE} result of the static analysis, we compare the
number of event sequences tested by our approach (the number of iterations) with
the number of event sequences in a pure black-box approach that would be needed
to guarantee correctness of the applications, that is, the detection of a
violated assertion. Note, that in the case of a \texttt{SAFE} result of the
static analysis, a pure black-box approach cannot guarantee the correctness of
the application. Thus, we express the number of event sequences with $\infty$.

\begin{table}[htbp]
 \centering
\begin{tabular}{l|r|r|r|r|r|r|r}
\hline
Benchmark			&Events	&Assertion	&Time(secs)	&TP Calls &Iterations  &Sequences &Result\\
\hline \hline
repair-2-cons		&3		&x!=8		&1,802		&267		&2			    & $\infty$          &SAFE\\
					&		&x!=9		&0,216		&33			&1			    & 9                 &UNSAFE \\
\hline
repair-2-excl		&3		&x!=3		&14,961		&1482		&4			    & $\infty$          &SAFE \\
					&		&x!=4		&0,066		&33			&1			    & 9                 &UNSAFE \\
\hline
repair-3-cons		&4		&x!=13		&0,028		&25			&1			    & $\infty$          &SAFE \\
					&		&x!=35		&1,019		&119		&1			    & 16                &UNSAFE \\
\hline
repair-3-excl		&5		&x!=13		&0,021		&36			&1			    & $\infty$          &SAFE \\
					&		&x!=25		&0,505		&116		&1			    & 25                &UNSAFE \\
\hline
repair-cmpd			&5		&x!=17		&0,009		&36			&1			    & $\infty$          &SAFE \\
					&		&x!=55		&1,56		&352		&1			    & 25                &UNSAFE\\
\hline
\end{tabular}
\vspace{0.5cm}
\caption{Experimental results.}
\label{table:experiments}
\end{table}


\section{Related Work}
\label{sec:related}

An approach which comes closest to our work is~\cite{memon99,memon01} which
incorporates planning from the domain of artificial intelligence to generate
test cases for GUI applications. The input to the planning system is a set of
operators (namely, the event handlers), an initial state, and a goal state of
the application. The planning system outputs a sequence of operators that lead
from the initial state to the goal state. However, in this approach a test
engineer has to manually define the preconditions and effects of each operator.
Our approach extends this idea as follows: First, we propose an automatic
translation of the operators of a GUI application into a Boogie program. Second,
the static analysis of our approach can be replaced by other techniques. Third,
our approach replays an event sequence in a black-box fashion on the GUI
application. Fourth, an event sequence which is not executable on the GUI is
used to refine the model which is then again used to translate the GUI
application.

The work in~\cite{berstel01} presents a general approach to specify user
interactions in GUI applications from a design perspective. This technique
allows the analysis of user interactions using model checking, and the synthesis
of user interactions to executable GUI applications. Since the work
in~\cite{berstel01} presents a high-level approach, it obviates the efforts of
extracting models, e.g., from the source code of an existing application. In our
case, we focus on supporting a test engineer which usually deals with executable
GUI applications instead of abstract models. Hence, the translation of an
existing application into a verifiable program presents one of the main
technical contribution of this paper. In particular, our approach allows the
analysis of an executable GUI application, e.g., even in the phase of
\emph{release-to-manufacturing} within a software release life cycle.

An approach which identifies useful abstractions of existing GUI applications is
presented in~\cite{dwyer97}. Those abstractions are based on structural features
of GUI applications, e.g., the enabledness of a button (enabled or disabled)
using a boolean value, or the current value of slider control using an integer
value. First, the abstractions are inferred manually from a GUI application.
Then, the abstractions are used to build a model which is checked by
SMV~\cite{clarke92}. In order to overcome the manual identification
abstractions, the work in~\cite{dwyer04} focuses on the automatic analysis of
interaction orderings with model checking. In the work~\cite{dwyer04} the model
is inferred via analyzing the code statically. The static analysis is tailored
to a specific GUI toolkit, namely Java Swing. Our approach uses a dynamic
approach: a model (the EFG) is created during the execution of the GUI
application. Furthermore, our approach introduces a refinement loop which allows
to improve the initial model by replaying event sequences obtained from the step
static analysis. Since GUI code is written in many ways, a static analysis
technique must be tailored to comprehend the behavior of each GUI toolkit. The
use of a black-box model is justified by the reasonable trade-off between
applicability and precision of a black-box model. Furthermore, the EFG is a
black-box model which works independently from a currently used GUI toolkit.

The work in~\cite{arlt12} (with shared co-authors) presents a lightweight static
analysis, which generates all event sequences that are at the same time
executable and justifiably relevant. First, the approach infers a model which
expresses dependencies of events of the GUI application. Second, event sequences
of bounded length are generated from this dependency model. Third, an event flow
graph is incorporated in order to convert event sequences from the dependency
model into executable event sequences. This paper represents a consistent
further development: it uses an advanced static analysis which is able to reason
about properties of a GUI application, instead of generating all event sequences
that might violate a specific property.

\section{Conclusion} 
\label{sec:conclusion}
In this paper, we have presented a method that can give a guarantee of
the absence of faults, i.e., the correctness of \emph{all} test cases
of the black-box model of a GUI application.  We have shown that the
black-model allows us to circumvent the static analysis of the GUI
tool kit, namely by constructing a program  which is amenable to
static analysis.  We have implemented our approach and we have
presented preliminary experiments which indicate its practical
potential.  The technical contribution of the paper is to \emph{make
  things work}.  This includes the construction of the program from a
given EFG as input to a static analysis-based verification procedure
and the program refinement procedure.

The main contribution of the paper is perhaps a conceptual one: the
realization that formal, static analysis-based
verification methods can have an interesting potential for GUI
applications, namely as a supplement to GUI testing.  
The crux is to take the perspective of the GUI tester
who, after having constructed the black-box model of the GUI
application, is faced with what can be phrased as a verification
problem.

In the future, we would like to explore efficient abstractions to support
events with user inputs. In our implementation, an enhanced bytecode translation
could increase the efficiency of static analysis by using a memory model
optimized for the verification of GUI applications.




\clearpage
\bibliographystyle{abbrv}
\bibliography{vmcai2013}

\begin{thebibliography}{10}

\bibitem{arlt12}
S.~Arlt, A.~Podelski, C.~Bertolini, M.~Schaef, I.~Banerjee, and A.~M. Memon.
\newblock {Lightweight Static Analysis for GUI Testing}.
\newblock In {\em ISSRE}. IEEE, 2012.

\bibitem{joogie}
S.~Arlt and M.~Schaef.
\newblock {Joogie: Infeasible Code Detection for Java}.
\newblock In P.~Madhusudan and S.~A. Seshia, editors, {\em CAV}, volume 7358 of
  {\em Lecture Notes in Computer Science}, pages 767--773. Springer, 2012.

\bibitem{boogie}
M.~Barnett, B.-Y.~E. Chang, R.~DeLine, B.~Jacobs, and K.~R.~M. Leino.
\newblock {Boogie: A Modular Reusable Verifier for Object-Oriented Programs}.
\newblock In F.~S. de~Boer, M.~M. Bonsangue, S.~Graf, and W.~P. de~Roever,
  editors, {\em FMCO}, volume 4111 of {\em Lecture Notes in Computer Science},
  pages 364--387. Springer, 2005.

\bibitem{belli01}
F.~Belli.
\newblock {Finite-State Testing and Analysis of Graphical User Interfaces}.
\newblock In {\em ISSRE}, pages 34--43. IEEE Computer Society, 2001.

\bibitem{berstel01}
J.~Berstel, S.~Crespi-Reghizzi, G.~Roussel, and P.~S. Pietro.
\newblock {A Scalable Formal Method for Design and Automatic Checking of User
  Interfaces}.
\newblock In H.~A. M{\"u}ller, M.~J. Harrold, and W.~Sch{\"a}fer, editors, {\em
  ICSE}, pages 453--462. IEEE Computer Society, 2001.

\bibitem{beyer09}
D.~Beyer, A.~Cimatti, A.~Griggio, M.~E. Keremoglu, and R.~Sebastiani.
\newblock {Software model checking via large-block encoding}.
\newblock In {\em FMCAD}, pages 25--32. IEEE, 2009.

\bibitem{chen02}
J.~Chen.
\newblock {Formal Modelling of Java GUI Event Handling}.
\newblock In C.~George and H.~Miao, editors, {\em ICFEM}, volume 2495 of {\em
  Lecture Notes in Computer Science}, pages 359--370. Springer, 2002.

\bibitem{DBLP:conf/spin/ChristHN12}
J.~Christ, J.~Hoenicke, and A.~Nutz.
\newblock {SMTInterpol: An Interpolating SMT Solver}.
\newblock In {\em SPIN}, pages 248--254, 2012.

\bibitem{clarke92}
E.~M. Clarke, O.~Grumberg, and D.~E. Long.
\newblock {Model Checking and Abstraction}.
\newblock In R.~Sethi, editor, {\em POPL}, pages 342--354. ACM Press, 1992.

\bibitem{Cousot:1977:AIU:512950.512973}
P.~Cousot and R.~Cousot.
\newblock {Abstract Interpretation: A Unified Lattice Model for Static Analysis
  of Programs by Construction or Approximation of Fixpoints}.
\newblock In R.~M. Graham, M.~A. Harrison, and R.~Sethi, editors, {\em POPL},
  pages 238--252. ACM, 1977.

\bibitem{dwyer97}
M.~B. Dwyer, V.~Carr, and L.~Hines.
\newblock {Model Checking Graphical User Interfaces Using Abstractions}.
\newblock In M.~Jazayeri and H.~Schauer, editors, {\em ESEC / SIGSOFT FSE},
  volume 1301 of {\em Lecture Notes in Computer Science}, pages 244--261.
  Springer, 1997.

\bibitem{dwyer04}
M.~B. Dwyer, Robby, O.~Tkachuk, and W.~Visser.
\newblock {Analyzing Interaction Orderings with Model Checking}.
\newblock In {\em ASE}, pages 154--163. IEEE Computer Society, 2004.

\bibitem{ermis12}
E.~Ermis, J.~Hoenicke, and A.~Podelski.
\newblock {Splitting via Interpolants}.
\newblock In V.~Kuncak and A.~Rybalchenko, editors, {\em VMCAI}, volume 7148 of
  {\em Lecture Notes in Computer Science}, pages 186--201. Springer, 2012.

\bibitem{ganov09}
S.~R. Ganov, C.~Killmar, S.~Khurshid, and D.~E. Perry.
\newblock {Event Listener Analysis and Symbolic Execution for Testing GUI
  Applications}.
\newblock In K.~Breitman and A.~Cavalcanti, editors, {\em ICFEM}, volume 5885
  of {\em Lecture Notes in Computer Science}, pages 69--87. Springer, 2009.

\bibitem{gross12}
F.~Gross, G.~Fraser, and A.~Zeller.
\newblock {EXSYST: Search-based GUI testing}.
\newblock In {\em ICSE}, pages 1423--1426. IEEE, 2012.

\bibitem{mariani12}
L.~Mariani, M.~Pezz{\`e}, O.~Riganelli, and M.~Santoro.
\newblock {AutoBlackTest: Automatic Black-Box Testing of Interactive
  Applications}.
\newblock In G.~Antoniol, A.~Bertolino, and Y.~Labiche, editors, {\em ICST},
  pages 81--90. IEEE, 2012.

\bibitem{memon07}
A.~M. Memon.
\newblock {An event-flow model of GUI-based applications for testing}.
\newblock {\em Softw. Test., Verif. Reliab.}, 17(3):137--157, 2007.

\bibitem{memon03}
A.~M. Memon, I.~Banerjee, and A.~Nagarajan.
\newblock {GUI Ripping: Reverse Engineering of Graphical User Interfaces for
  Testing}.
\newblock In A.~van Deursen, E.~Stroulia, and M.-A.~D. Storey, editors, {\em
  WCRE}, pages 260--269. IEEE Computer Society, 2003.

\bibitem{memon99}
A.~M. Memon, M.~E. Pollack, and M.~L. Soffa.
\newblock {Using a Goal-Driven Approach to Generate Test Cases for GUIs}.
\newblock In B.~W. Boehm, D.~Garlan, and J.~Kramer, editors, {\em ICSE}, pages
  257--266. ACM, 1999.

\bibitem{memon01}
A.~M. Memon, M.~E. Pollack, and M.~L. Soffa.
\newblock {Hierarchical GUI Test Case Generation Using Automated Planning}.
\newblock {\em IEEE Trans. Software Eng.}, 27(2):144--155, 2001.

\bibitem{white00}
L.~J. White and H.~Almezen.
\newblock {Generating Test Cases for GUI Responsibilities Using Complete
  Interaction Sequences}.
\newblock In {\em ISSRE}, pages 110--123. IEEE Computer Society, 2000.

\end{thebibliography}

\end{document}